\newcommand{\qed}{\rule{7pt}{7pt}}
\newcommand{\proofof}[1]{\smallskip\par\noindent{\sl Proof of #1}:\enspace}
\newenvironment{proof}{\noindent {\it Proof\/}:}{$\qed$ \medskip}
\newtheorem{theorem}{Theorem}[section]
\newtheorem{lemma}[theorem]{Lemma}
\def\({\left(}
\def\){\right)}
\newcommand{\OPT}{\mathrm{OPT}}
\newcommand{\SAT}{\mathrm{SAT}}
\newcommand{\UNSAT}{\mathrm{UNSAT}}
\newcommand{\LP}{\mathrm{LP}}
\begin{document}

\title{On Some Recent MAX SAT Approximation Algorithms}
\author{Matthias Poloczek\thanks{Address: Institute of Computer Science, Goethe University, Frankfurt am Main, Germany.
Email: \texttt{matthias@thi.cs.uni-frankfurt.de}.}
 \and David P.\ Williamson\thanks{Address: School
of Operations Research and Information Engineering, Cornell
University, Ithaca, NY, USA.  Email: {\tt dpw@cs.cornell.edu}.
Supported in part by NSF grant CCF-1115256.}
\and Anke van Zuylen\thanks{Address: Department of Mathematics, College of William and Mary, Williamsburg, VA, USA.   Email: {\tt anke@wm.edu.}}}

\date{}
\maketitle
\thispagestyle{empty}

\begin{abstract}
Recently a number of randomized $\frac{3}{4}$-approximation algorithms for MAX SAT have been proposed that all work in the same way: given a fixed ordering of the variables, the algorithm makes a random assignment to each variable in sequence, in which the probability of assigning each variable true or false depends on the current set of satisfied (or unsatisfied) clauses.  To our knowledge, the first such algorithm was proposed by Poloczek and Schnitger \cite{PoloczekS11}; Van Zuylen \cite{VanZuylen11} subsequently gave an algorithm that set the probabilities differently and had a simpler analysis.  She also set up a framework for deriving such algorithms.  Buchbinder, Feldman, Naor, and Schwartz \cite{BuchbinderFNS12}, as a special case of their work on maximizing submodular functions, also give a randomized $\frac{3}{4}$-approximation algorithm for MAX SAT with the same structure as these previous algorithms.  In this note we give a gloss on the Buchbinder et al.\ algorithm that makes it even simpler, and show that in fact it is equivalent to the previous algorithm of Van Zuylen.   We also show how it extends to a deterministic LP rounding algorithm; such an algorithm was also given by Van Zuylen \cite{VanZuylen11}.
\end{abstract}


\section{Introduction}

The maximum satisfiability problem (MAX SAT) is a fundamental problem in discrete optimization.  In the problem we are given $n$ boolean variables $x_1,\ldots,x_n$ and $m$ clauses that are conjunctions of the variables or their negations.  With each clause $C_j$, there is an associated weight $w_j \geq 0$.  We say a clause is {\em satisfied} if one of its positive variables is set to true or if one of its negated variables is set to false.  The goal of the problem is to find an assignment of truth values to the variables so as to maximize the total weight of the satisfied clauses.  The problem is NP-hard via a trivial reduction from satisfiability.

We say we have an $\alpha$-approximation algorithm for MAX SAT if we have a polynomial-time algorithm that computes an assignment whose total weight of satisfied clauses is at least $\alpha$ times that of an optimal solution; we call $\alpha$ the performance guarantee of the algorithm.  A randomized $\alpha$-approximation algorithm is a randomized polynomial-time algorithm such that the expected weight of the satisfied clauses is at least $\alpha$ times that of an optimal solution.  The 1974 paper of Johnson \cite{Johnson74}, which introduced the notion of an approximation algorithm, also gave a $\frac{1}{2}$-approximation algorithm for MAX SAT.  This algorithm was later shown to be a $\frac{2}{3}$-approximation algorithm by Chen, Friesen, and Zheng \cite{ChenFZ99} (see also the simpler analysis of Engebretsen \cite{Engebretsen04}).  Yannakakis \cite{Yannakakis94} gave the first $\frac{3}{4}$-approximation algorithm for MAX SAT; it uses network flow and linear programming computation as subroutines.  Goemans and Williamson \cite{GoemansW94} subsequently showed how to use randomized rounding of a linear program to obtain a $\frac{3}{4}$-approximation algorithm for MAX SAT.  Subsequent approximation algorithms which use semidefinite programming have led to still better performance guarantees.

In 1998, Williamson \cite[p.\ 45]{Williamson98} posed the question of whether it is possible to obtain a $\frac{3}{4}$-approximation algorithm for MAX SAT without solving a linear program.  This question was answered positively in 2011 by Poloczek and Schnitger \cite{PoloczekS11}.  They give a randomized algorithm with the following particularly simple structure: given a fixed ordering of the variables, the algorithm makes a random assignment to each variable in sequence, in which the probability of assigning each variable true or false depends on the current set of satisfied (or unsatisfied) clauses.  Subsequently, Van Zuylen \cite{VanZuylen11} gave an algorithm with the same structure that set the probabilities differently and had a simpler analysis.  She also set up a framework for deriving such algorithms.  In 2012, Buchbinder, Feldman, Naor, and Schwartz \cite{BuchbinderFNS12}, as a special case of their work on maximizing submodular functions, also gave a randomized $\frac{3}{4}$-approximation algorithm for MAX SAT with the same structure as these previous algorithms\footnote{In the extended abstract of \cite{BuchbinderFNS12}, the authors claim the MAX SAT result and omit the proof, but it is not difficult to reconstruct the proof from the rest of the paper.}.
Poloczek \cite{Poloczek11} gives evidence that the randomization is necessary for this style of algorithm by showing that a deterministic algorithm that sets the variables in order (where the next variable to set is chosen adaptively) and uses a particular set of information about the clauses cannot achieve performance guarantee better than $\frac{\sqrt{33}+3}{12} \approx .729$.  However, Van Zuylen \cite{VanZuylen11} shows that it is possible to give a deterministic $\frac{3}{4}$-approximation algorithm with the same structure given a solution to a linear programming relaxation.

The goal of this note is to give an interpretation of the Buchbinder et al.\ MAX SAT algorithm that we believe is conceptually simpler than the one given there.  We also restate the proof in terms of our interpretation.  We further show that the Buchbinder et al.\ algorithm is in fact equivalent to the previous algorithm of Van Zuylen.  We extend the algorithm and analysis to a deterministic LP rounding algorithm.

Here we give the main idea of our perspective on the algorithm.  Consider greedy algorithms that set the variables $x_i$ in sequence. A natural greedy algorithm sets $x_i$ to true or false depending on which increases the total weight of the satisfied clauses by the most.  An alternative to this algorithm would be to set each $x_i$ so as to increase the total weight of the clauses that are not yet {\em unsatisfied} given the setting of the variable (a clause is unsatisfied if all the variables of the clause have been set and their assignment does not satisfy the clause).  The algorithm is in a sense a randomized balancing of these two algorithms.  It maintains a bound that is the average of  two numbers, the total weight of the clauses satisfied thus far, and the total weight of the clauses that are not yet unsatisfied.  For each variable $x_i$, it computes the amount by which the bound will increase if $x_i$ is set true or false; one can show that the sum of these two quantities is always nonnegative.  If one assignment causes the bound to decrease, the variable is given the other assignment (e.g.\ if assigning $x_i$ true decreases the bound, then it is assigned false).  Otherwise, the variable is set randomly with a bias towards the larger increase.

This note is structured as follows.  Section \ref{sec:notation} sets up some notation we will need.  Section \ref{sec:alg} gives the randomized $\frac{3}{4}$-approximation algorithm and its analysis.  Section \ref{sec:lp} extends these to a deterministic LP rounding algorithm.  Section \ref{sec:anke} explains how the algorithm is equivalent to the previous algorithm of Van Zuylen.  We conclude with some open questions in Section \ref{sec:conc}.

\section{Notation}
\label{sec:notation}

We assume a fixed ordering of the variables, which for simplicity will be given as $x_1,x_2,\ldots,x_n$.  As the algorithm proceeds, it will sequentially set the variables; let $S_i$ denote some setting of the first $i$ variables.  Let $W= \sum_{j=1}^m w_j$ be the total weight of all the clauses.  Let $\SAT_i$ be the total weight of clauses satisfied by $S_i$, and let $\UNSAT_i$ be the total weight of clauses that are unsatisfied by $S_i$; that is, clauses that only have variables from $x_1,\ldots,x_i$ and are not satisfied by $S_i$.   Note that $\SAT_i$ is a lower bound on the total weight of clauses satisfied by our final assignment $S_n$ (once we have set all the variables); furthermore, note that $W-\UNSAT_i$ is an upper bound on the total weight of clauses satisfied by our final assignment $S_n$.  We let $B_i  = \frac{1}{2}(\SAT_i + (W-\UNSAT_i))$ be the midpoint between these two bounds; we refer to it simply as the bound on our partial assignment $S_i$.  For any assignment $S$ to all of the variables, let $w(S)$ represent the total weight of the satisfied clauses.  Then we observe that for the assignment $S_n$, $w(S_n) = \SAT_n = W-\UNSAT_n$, so that $w(S_n) = B_n$.  Furthermore, $\SAT_0 = 0$ and $\UNSAT_0 = 0$, so that $B_0 = \frac{1}{2}W$.

Note that our algorithm will be randomized, so that $S_i$, $\SAT_i$, $\UNSAT_i$, and $B_i$ are all random variables.

\section{The Algorithm and its Analysis}
\label{sec:alg}

The goal of the algorithm is at each step to try to increase the bound;  that is, we would like to set $x_i$ randomly so as to increase $E[B_i - B_{i-1}]$.  We let $t_i$ be the value of $B_i - B_{i-1}$ in which we set $x_i$ true, and $f_i$ the value of $B_i - B_{i-1}$ in which we set $x_i$ false.  Note that the expectation is conditioned on our previous setting of the variables $x_1,\ldots,x_{i-1}$, but we omit the conditioning for simplicity of notation.  We will show momentarily that $t_i + f_i \geq 0$.  Then the algorithm is as follows.  If $f_i \leq 0$, we set $x_i$ true; that is, if setting $x_i$ false would not increase the bound, we set it true.  Similarly, if $t_i \leq 0$ (setting $x_i$ true would not increase the bound) we set $x_i$ false.  Otherwise, if either setting $x_i$ true or false would increase the bound, we set $x_i$ true with probability $\frac{t_i}{t_i + f_i}$.

\begin{lemma} \label{lem:geq} For $i = 1,\ldots,n$,
$$t_i + f_i \geq 0.$$
\end{lemma}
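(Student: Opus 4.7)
The plan is to expand $B_i-B_{i-1} = \tfrac12((\SAT_i-\SAT_{i-1}) - (\UNSAT_i-\UNSAT_{i-1}))$ and then identify, clause by clause, which terms contribute to $t_i$ and to $f_i$. The crucial observation is that setting $x_i$ can only change the status of clauses that actually contain $x_i$ or $\bar x_i$; all other clauses are unaffected by this single assignment. So I can restrict attention to the clauses mentioning $x_i$ in some form, conditioned on the partial assignment $S_{i-1}$.

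Among such clauses, I would introduce (letting the previous setting $S_{i-1}$ be fixed) the total weights
\begin{itemize}
\item $P$ of clauses containing $x_i$ positively that are not in $\SAT_{i-1}$,
\item $N$ of clauses containing $\bar x_i$ that are not in $\SAT_{i-1}$,
\item $P^\star \le P$ of clauses in the ``$P$'' group whose remaining variables all lie in $\{x_1,\dots,x_{i-1}\}$ (so setting $x_i$ false would render them unsatisfied),
\item $N^\star \le N$ of clauses in the ``$N$'' group whose remaining variables all lie in $\{x_1,\dots,x_{i-1}\}$ (so setting $x_i$ true would render them unsatisfied).
\end{itemize}
Setting $x_i$ true increases $\SAT$ by exactly $P$ and increases $\UNSAT$ by exactly $N^\star$, and setting $x_i$ false increases $\SAT$ by $N$ and $\UNSAT$ by $P^\star$. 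Plugging into the definition of $B_i-B_{i-1}$ gives
$$t_i = \tfrac12(P - N^\star), \qquad f_i = \tfrac12(N - P^\star).$$

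Summing,
$$t_i + f_i = \tfrac12\bigl((P-P^\star) + (N-N^\star)\bigr),$$
and since $P^\star \le P$ and $N^\star \le N$ (each ``starred'' set is a subset of its unstarred counterpart), each parenthesized term is nonnegative and the lemma follows. The main (and essentially only) obstacle is making sure the accounting in step one is correct: that the clauses whose $\SAT/\UNSAT$ status actually changes when $x_i$ is set are exactly the ones I have identified, and that no double counting occurs (in particular, clauses already counted in $\SAT_{i-1}$ or $\UNSAT_{i-1}$ are excluded, and a clause in $P^\star$ necessarily has all its literals falsified once $x_i$ is set false, etc.). Once that bookkeeping is nailed down, the inequality $t_i+f_i\ge 0$ is immediate from the trivial containment $P^\star\subseteq P$, $N^\star\subseteq N$.
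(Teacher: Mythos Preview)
Your proof is correct and follows essentially the same approach as the paper: the paper's key inequalities $\SAT_{i,t}-\SAT_{i-1}\ge \UNSAT_{i,f}-\UNSAT_{i-1}$ and $\SAT_{i,f}-\SAT_{i-1}\ge \UNSAT_{i,t}-\UNSAT_{i-1}$ are exactly your containments $P^\star\le P$ and $N^\star\le N$, just stated at the level of aggregate weights rather than via an explicit clause-by-clause partition. Your presentation is a bit more granular, but the underlying argument is identical.
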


\begin{proof}
We note that any clause that becomes unsatisfied by $S_{i-1}$ and setting $x_i$ true must be then be satisfied by setting $x_i$ false, and similarly any clause that becomes unsatisfied by $S_{i-1}$ and setting $x_i$ false must then be satisfied by setting $x_i$ true.  Let $\SAT_{i,t}$ be the clauses that are satisfied by setting $x_i$ true given the partial assignment $S_{i-1}$, and $\SAT_{i,f}$ be the clauses satisfied by setting $x_i$ false given the partial assignment $S_{i-1}$.  We define $\UNSAT_{i,t}$ ($\UNSAT_{i,f}$) to be the clauses unsatisfied by $S_{i-1}$ and $x_i$ set true (respectively false).  Our observation above implies that $\SAT_{i,f} - \SAT_{i-1} \geq \UNSAT_{i,t} - \UNSAT_{i-1}$ and $\SAT_{i,t} - \SAT_{i-1} \geq \UNSAT_{i,f} - \UNSAT_{i-1}.$

Let $B_{i,t} = \frac{1}{2}(\SAT_{i,t} + (W - \UNSAT_{i,t}))$ and $B_{i,f} = \frac{1}{2}(\SAT_{i,f} + (W - \UNSAT_{i,f}))$.  Then $t_i = B_{i,t} - B_{i-1}$ and $f_i = B_{i,f} - B_{i-1}$; our goal is to show that $t_i + f_i \geq 0$, or
$$\frac{1}{2}(\SAT_{i,t} + (W - \UNSAT_{i,t})) + \frac{1}{2}(\SAT_{i,f} + (W - \UNSAT_{i,f})) - \SAT_{i-1} - (W - \UNSAT_{i-1}) \geq 0.$$  Rewriting, we want to show that $$\frac{1}{2}(\SAT_{i,t} - \SAT_{i-1}) + \frac{1}{2}(\SAT_{i,f} - \SAT_{i-1}) \geq \frac{1}{2}(\UNSAT_{i,f} - \UNSAT_{i-1}) + \frac{1}{2}(\UNSAT_{i,t} - \UNSAT_{i-1}),$$ and this follows from the inequalities of the previous paragraph.
\end{proof}

Let $x^*$ be a fixed optimal solution. Following both Poloczek and Schnitger, and Buchbinder et al., given a partial assignment $S_i$, let $\OPT_i$ be the assignment in which variables $x_1,\ldots,x_i$ are set as in $S_i$, and $x_{i+1},\ldots,x_n$ are set as in $x^*$.  Thus if $\OPT$ is the value of an optimal solution, $w(\OPT_0) = \OPT$, while $w(\OPT_n) = w(S_n)$.

The following lemma is at the heart of both analyses (see Section 2.2 in Poloczek and Schnitger \cite{PoloczekS11} and Lemma III.1 of Buchbinder et al.\ \cite{BuchbinderFNS12}).

\begin{lemma} \label{lem:main} For $i=1,\ldots,n$, the following holds:
$$E[w(\OPT_{i-1}) - w(\OPT_i)] \leq E[B_i - B_{i-1}].$$
\end{lemma}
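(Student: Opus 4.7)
The plan is to establish the inequality pointwise after conditioning on the history $S_{i-1}$; taking total expectation then yields the lemma. For a fixed $S_{i-1}$, let $\alpha_{i,t}$ be the total weight of clauses containing $x_i$ positively whose other literals are not satisfied by $\OPT_{i-1}$ (equivalently by $\OPT_i$, since the two assignments agree outside $x_i$), and let $\alpha_{i,f}$ be the analogous quantity for clauses containing $x_i$ negatively. Because $\OPT_{i-1}$ and $\OPT_i$ differ only on $x_i$, a direct case split gives $w(\OPT_{i-1}) - w(\OPT_i) = 0$ whenever the algorithm's choice of $x_i$ agrees with $x^*_i$, and otherwise $w(\OPT_{i-1}) - w(\OPT_i)$ equals $\alpha_{i,t} - \alpha_{i,f}$ or $\alpha_{i,f} - \alpha_{i,t}$ according to whether $x^*_i$ is true or false. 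Writing $p_i$ for the probability (conditional on $S_{i-1}$) that the algorithm sets $x_i$ true, the conditional expectation of $w(\OPT_{i-1}) - w(\OPT_i)$ is then $(1-p_i)(\alpha_{i,t} - \alpha_{i,f})$ if $x^*_i$ is true and $p_i(\alpha_{i,f} - \alpha_{i,t})$ if $x^*_i$ is false, while $E[B_i - B_{i-1} \mid S_{i-1}] = p_i t_i + (1-p_i) f_i$.

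The core of the argument is the pair of inequalities
\[
2 t_i \geq \alpha_{i,t} - \alpha_{i,f}, \qquad 2 f_i \geq \alpha_{i,f} - \alpha_{i,t},
\]
which follow from $2 t_i = (\SAT_{i,t} - \SAT_{i-1}) - (\UNSAT_{i,t} - \UNSAT_{i-1})$ (and its symmetric version) together with two set-inclusion observations. First, every clause counted by $\alpha_{i,t}$ is unsatisfied by $S_{i-1}$ on its first $i-1$ variables (since not even the richer assignment $\OPT_{i-1}$ satisfies its other literals), so $\SAT_{i,t} - \SAT_{i-1} \geq \alpha_{i,t}$. Second, every clause newly unsatisfied when $x_i$ is set true has $x_i$ appearing negatively with all remaining variables in $x_1, \ldots, x_{i-1}$ and none of them satisfying the clause; such a clause is counted by $\alpha_{i,f}$, which gives $\UNSAT_{i,t} - \UNSAT_{i-1} \leq \alpha_{i,f}$.

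I would finish with a case analysis on the algorithm's rule. If $f_i \leq 0$ then $p_i = 1$; the inequality $2 f_i \geq \alpha_{i,f} - \alpha_{i,t}$ forces $\alpha_{i,f} \leq \alpha_{i,t}$, and combined with $t_i \geq -f_i \geq 0$ this dispatches both $x^*_i$ subcases. The case $t_i \leq 0$ is symmetric. In the mixed case $t_i, f_i > 0$, clearing the common denominator $t_i + f_i$ reduces the goal to $t_i(\alpha_{i,f} - \alpha_{i,t}) \leq t_i^2 + f_i^2$ (and its mirror when $x^*_i$ is true); if the left side is nonpositive this is trivial, and otherwise $2 f_i \geq \alpha_{i,f} - \alpha_{i,t}$ combined with AM-GM gives $t_i^2 + f_i^2 \geq 2 t_i f_i \geq t_i(\alpha_{i,f} - \alpha_{i,t})$.

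The main obstacle I anticipate is nailing down the combinatorial bookkeeping for the two set inclusions in the middle paragraph: being precise about which clauses are charged to each of $\SAT_{i,t} - \SAT_{i-1}$, $\UNSAT_{i,t} - \UNSAT_{i-1}$, $\alpha_{i,t}$, and $\alpha_{i,f}$, since the whole argument rests on those identifications. Once those are in place, only routine AM-GM and sign analysis remain.
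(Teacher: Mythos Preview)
Your proposal is correct and follows essentially the same route as the paper. The paper packages the key estimate as Lemma~\ref{lem:opti}, proving $w(\OPT_{i-1}) - w(\OPT_i) \leq 2f_i$ (when $x_i$ is set true and $x^*_i$ is false) by exactly the comparison with $\SAT_{i,f}-\SAT_{i-1}$ and $\UNSAT_{i,f}-\UNSAT_{i-1}$ that your two set inclusions encode, and then finishes the main lemma by the same case split on the signs of $t_i,f_i$ together with $2t_if_i \leq t_i^2+f_i^2$; your $\alpha_{i,t},\alpha_{i,f}$ just make the intermediate bookkeeping a bit more explicit.
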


Before we prove the lemma, we show that it leads straightforwardly to the desired approximation bound.

\begin{theorem} \label{thm:34}
$$E[w(S_n)] \geq \frac{3}{4} \OPT.$$
\end{theorem}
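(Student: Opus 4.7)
The plan is to telescope Lemma~\ref{lem:main}. Summing the inequality $E[w(\OPT_{i-1}) - w(\OPT_i)] \leq E[B_i - B_{i-1}]$ over $i = 1,\ldots,n$ gives
$$E[w(\OPT_0) - w(\OPT_n)] \leq E[B_n - B_0].$$
Now I would substitute the four identities already established in Section~\ref{sec:notation} and in the setup of Lemma~\ref{lem:main}: namely $w(\OPT_0) = \OPT$, $w(\OPT_n) = w(S_n)$, $B_n = w(S_n)$, and $B_0 = \tfrac{1}{2}W$. This turns the telescoped inequality into
$$\OPT - E[w(S_n)] \leq E[w(S_n)] - \tfrac{1}{2}W,$$
so rearranging gives $2\,E[w(S_n)] \geq \OPT + \tfrac{1}{2}W$.

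To finish, I would use the trivial bound $W \geq \OPT$ (since $\OPT$ is the weight of some subset of the clauses and $W$ is the weight of all of them). This yields $\tfrac{1}{2}W \geq \tfrac{1}{2}\OPT$, hence $2\,E[w(S_n)] \geq \tfrac{3}{2}\OPT$, which is the desired bound.

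There is essentially no obstacle here: Lemma~\ref{lem:main} does all the real work, and the remainder is bookkeeping with the definitions of $\OPT_i$, $B_i$, and $W$. The one small thing worth pointing out is that the proof uses both bounds built into the definition of $B_i$: the upper bound $W - \UNSAT_i$ on the final value supplies $B_0 = W/2$ (so that there is ``room'' for the telescoping sum on the right), while the lower bound $\SAT_i$ gives $B_n = w(S_n)$ (so that the telescoping sum actually collapses to something involving the algorithm's output).
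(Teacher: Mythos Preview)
Your proof is correct and follows essentially the same route as the paper: sum Lemma~\ref{lem:main} over $i$, telescope both sides, substitute $w(\OPT_0)=\OPT$, $w(\OPT_n)=w(S_n)$, $B_n=w(S_n)$, $B_0=\tfrac12 W$, and finish with $W\ge \OPT$. The paper's argument is line-for-line the same, without your closing commentary on the roles of the two bounds in $B_i$.
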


\begin{proof}
We sum together the inequalities from the lemma, so that
$$\sum_{i=1}^{n} E[w(\OPT_{i-1}) - w(\OPT_i)] \leq  \sum_{i=1}^{n} E[B_i - B_{i-1}].$$
Using the linearity of expectation and telescoping the sums, we get
$$E[w(\OPT_0) - w(\OPT_n)] \leq E[B_n] - E[B_0].$$
Thus
$$\OPT - E[w(S_n)] \leq E[w(S_n)] - \frac{1}{2}W,$$
or
$$\OPT + \frac{1}{2} W \leq 2 E[w(S_n)],$$
or
$$\frac{3}{4} \OPT \leq E[w(S_n)]$$ as desired, since $\OPT \leq W$. \end{proof}

The following lemma is the key insight of proving the main lemma, and is the randomized balancing of the two greedy algorithms mentioned in the introduction. The expectation bound holds whether $x^*_i$ is true or false.

\begin{lemma} \label{lem:opti}
$$E[w(\OPT_{i-1}) - w(\OPT_i)] \leq \max\left(0,\frac{2t_i f_i}{t_i + f_i}\right).$$
\end{lemma}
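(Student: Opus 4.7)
My plan is to split into cases based on the algorithm's random choice for $x_i$ and on the (fixed) value of $x^*_i$. When the algorithm's choice agrees with $x^*_i$ we have $\OPT_{i-1}=\OPT_i$, so the contribution to the expectation is zero. The heart of the argument is thus a deterministic bound on $w(\OPT_{i-1})-w(\OPT_i)$ in the two disagreement cases, which I will then combine with the algorithm's probabilities.

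Specifically, I will first establish that
$$w(\OPT_{i-1})-w(\OPT_i) \le 2 t_i$$
whenever the algorithm sets $x_i$ false and $x^*_i$ is true, and symmetrically that $w(\OPT_{i-1})-w(\OPT_i) \le 2 f_i$ whenever the algorithm sets $x_i$ true and $x^*_i$ is false. To prove the first, note that $\OPT_{i-1}$ and $\OPT_i$ differ only in the value of $x_i$, so only clauses containing $x_i$ or $\bar{x}_i$ contribute to their weight difference. Any clause satisfied by $\OPT_{i-1}$ but not by $\OPT_i$ must contain $x_i$ positively with no other satisfied literal under the common assignment; in particular, such a clause is not satisfied by $S_{i-1}$, so its weight is counted in $\SAT_{i,t}-\SAT_{i-1}$. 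Conversely, every clause counted in $\UNSAT_{i,t}-\UNSAT_{i-1}$ contains $\bar{x}_i$ with all of its remaining variables in $x_1,\ldots,x_{i-1}$ and not satisfied by $S_{i-1}$, and is therefore satisfied by $\OPT_i$ (because $\bar{x}_i$ is true there) but not by $\OPT_{i-1}$. Hence the positive part of $w(\OPT_{i-1})-w(\OPT_i)$ is at most $\SAT_{i,t}-\SAT_{i-1}$ while the negative part subtracts at least $\UNSAT_{i,t}-\UNSAT_{i-1}$, giving the desired bound of $2 t_i$.

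To conclude, I will plug these deterministic bounds into the algorithm's randomization. If $t_i>0$ and $f_i>0$, the algorithm sets $x_i$ true with probability $t_i/(t_i+f_i)$ and false with probability $f_i/(t_i+f_i)$; whichever value $x^*_i$ takes, exactly one of the two disagreement subcases can occur, and its expected contribution is at most $\frac{f_i}{t_i+f_i}\cdot 2 t_i$ or $\frac{t_i}{t_i+f_i}\cdot 2 f_i$, both of which equal $\frac{2 t_i f_i}{t_i+f_i}$. If $t_i \le 0$, the algorithm deterministically sets $x_i$ false, and the bound $w(\OPT_{i-1})-w(\OPT_i) \le 2 t_i \le 0$ (together with a zero contribution from the agreeing case) yields $E[w(\OPT_{i-1})-w(\OPT_i)] \le 0 = \max(0,\frac{2 t_i f_i}{t_i+f_i})$, using $f_i\ge 0$ implied by Lemma~\ref{lem:geq}. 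The case $f_i \le 0$ is analogous.

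The main obstacle is the deterministic inequality $w(\OPT_{i-1})-w(\OPT_i) \le 2 t_i$. It rests on the twin observations that the clauses ``lost'' when $\OPT$ flips $x_i$ from true to false form a subset of those counted in $\SAT_{i,t}-\SAT_{i-1}$, while the clauses ``gained'' form a superset of those counted in $\UNSAT_{i,t}-\UNSAT_{i-1}$---essentially the same containments underlying Lemma~\ref{lem:geq}, now deployed in the opposite direction.
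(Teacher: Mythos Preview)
Your proposal is correct and follows essentially the same approach as the paper: you establish the deterministic inequality $w(\OPT_{i-1})-w(\OPT_i)\le 2t_i$ (respectively $2f_i$) in the disagreement case by comparing the clauses lost and gained with $\SAT_{i,t}-\SAT_{i-1}$ and $\UNSAT_{i,t}-\UNSAT_{i-1}$, and then take the expectation over the algorithm's coin flip, exactly as the paper does. The only cosmetic difference is that the paper fixes $x^*_i$ false and proves the $2f_i$ bound, whereas you state both symmetric bounds and argue the $2t_i$ case; the case analysis and the combination with the probabilities are otherwise identical.
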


\begin{proof}
Assume for the moment that $x^*_i$ is set false; the proof is analogous if $x_i^*$ is true. We claim that if $x_i$ is set true while $x^*_i$ is false, then $w(\OPT_{i-1}) - w(\OPT_i) \leq 2f_i$.  If $f_i \leq 0$, then we set $x_i$ true and the lemma statement holds given the claim.  If $t_i \leq 0$, we set $x_i$ false; then the assignment $\OPT_i$ is the same as $\OPT_{i-1}$ so that  $w(\OPT_i)- w(\OPT_{i-1}) = 0$ and the lemma statement again holds.  Now assume both $f_i > 0$ and $t_i > 0$.  We set $x_i$ false with probability $f_i/(t_i + f_i)$, so that again $w(\OPT_i)- w(\OPT_{i-1}) = 0$.  We set $x_i$ true with probability $t_i/(t_i + f_i)$.    If the claim holds then the lemma is shown, since
$$E[w(\OPT_{i-1}) - w(\OPT_i)] \leq \frac{f_i}{t_i + f_i} \cdot 0 + \frac{t_i}{t_i + f_i} \cdot 2f_i = \frac{2t_i f_i}{t_i + f_i}.$$

If $x_i$ is set true while $x^*_i$ is false,  $\OPT_{i-1}$ differs from $\OPT_i$ precisely by having the $i$th variable set false, and $w(\OPT_{i-1}) - w(\OPT_i)$ is the difference in the weight of the satisfied clauses made by flipping the $i$th variable from true to false. Since both assignments have the first $i-1$ variables set as in $S_{i-1}$, they both satisfy at least $\SAT_{i-1}$ total weight, so the increase of flipping the $i$th variable from true to false is at most $\SAT_{i,f} - \SAT_{i-1}$.  Additionally, both assignments leave at least $\UNSAT_{i-1}$ total weight of clauses unsatisfied, so that flipping the $i$th variable from true to false leaves at least $\UNSAT_{i,f} - \UNSAT_{i-1}$ additional weight unsatisfied.  In particular, flipping the $i$th variable will unsatisfy an additional $\UNSAT_{i,f} - \UNSAT_{i-1}$ weight of the clauses that only have variables from $x_1,\ldots,x_i$ and may unsatisfy additional clauses as well.  Thus, if $x_i$ is set true and $x^*_i$ is false,
\begin{align*}
w(\OPT_{i-1}) - w(\OPT_i) & \leq
(\SAT_{i,f}-\SAT_{i-1}) -(\UNSAT_{i,f} - \UNSAT_{i-1})\\
& = (\SAT_{i,f}+(W-\UNSAT_{i,f})) - (\SAT_{i-1} + (W - \UNSAT_{i-1})) \\
& =  2(B_{i,f} - B_{i-1}) = 2f_i.
\end{align*}
\end{proof}

Now we can prove the main lemma.

\proofof{Lemma \ref{lem:main}}
If either $t_i \leq 0$ or $f_i \leq 0$, then by Lemma \ref{lem:opti}, we set $x_i$ deterministically so that the bound does not decrease and $B_i - B_{i-1} \geq 0.$ Since then $t_i f_i \leq 0$, by Lemma \ref{lem:opti} $$E[w(\OPT_{i-1}) - w(\OPT_i)] \leq \max(0,2t_if_i/(t_i + f_i)) \leq 0,$$ and the inequality holds.

If both $t_i, f_i > 0$,  then
\begin{align*}
E[B_i - B_{i-1}] & = \frac{t_i}{t_i + f_i} [B_{i,t} - B_{i-1}] + \frac{f_i}{t_i + f_i} [B_{i,f} - B_{i-1}] \\
& = \frac{t_i^2 + f_i^2}{t_i + f_i},
\end{align*}
while by Lemma \ref{lem:opti}
$$E[w(\OPT_{i-1}) - w(\OPT_i)] \leq \frac{2t_i f_i}{t_i + f_i}.$$  Therefore in order to verify the inequality, we need to show that when $t_i, f_i > 0$,
$$ \frac{2t_i f_i}{t_i + f_i} \leq \frac{t_i^2 + f_i^2}{t_i + f_i},$$
which follows since $t_i^2 + f_i^2 - 2t_if_i = (t_i - f_i)^2 \geq 0.$
\qed

\section{Another Deterministic LP Rounding Algorithm}
\label{sec:lp}

We can now take essentially the same algorithm and analysis, and use it to obtain a deterministic LP rounding algorithm.

We first give the standard LP relaxation of MAX SAT.  It uses decision variables $y_i \in \{0,1\}$, where $y_i = 1$ corresponds to $x_i$ being set true, and $z_j \in \{0,1\}$, where $z_j = 1$ corresponds to clause $j$ being satisfied.  Let $P_j$ be the set of variables that occur positively in clause $j$ and $N_j$ be the set of variables that occur negatively.  Then the LP relaxation is:
\begin{alignat*}{6}
& \mbox{maximize } & \sum_{j=1}^m w_j z_j \\
& \mbox{subject to } &
\sum_{i \in P_j} y_i + \sum_{i \in N_j} (1 - y_i) & \geq z_j, &
\qquad & \forall C_j = \bigvee_{i \in P_j} x_i \vee
\bigvee_{i \in N_j} \bar x_i,\\
& & 0 \leq y_i & \leq 1, & & i = 1,\ldots,n,\\
& & 0 \leq z_j & \leq 1, & &  j = 1,\ldots,m.
\end{alignat*}
Note that given a setting of $y$, we can easily find the best possible $z$ by setting $z_j = \max(1, \sum_{i \in P_j} y_i + \sum_{i \in N_j} (1 - y_i))$.  Let $\OPT_{\LP}$ be the optimal value of the LP.

Let $y^*$ be an optimal solution to the LP relaxation.  As before, our algorithm will sequence through the variables $x_i$, deciding at each step whether to set $x_i$ to true or false; now the decision will be made deterministically.  Let $B_i$ be the same bound as before, and as before let $t_i$ be the increase in the bound if $x_i$ is set true, and $f_i$ the increase if $x_i$ is set false.  The concept corresponding to $\OPT_i$ in the previous algorithm is $\LP_i$, the best possible solution to the LP for a vector $\hat y$ in which the first $i$ elements are 0s and 1s corresponding to our assignment $S_i$, while the remaining entries are the values of the optimal LP solution $y^*_{i+1}, \ldots, y^*_n$.  Thus $\LP_0= \OPT_{\LP}$  and $\LP_n= w(S_n)$, the weight of our assignment.   We further introduce the notation $\LP_{i,t}$ ($\LP_{i,f}$), which correspond to the best possible solution to the LP for the vector in which the first $i-1$ elements are 0s and 1s corresponding to our assignment $S_{i-1}$, the entries for $i+1$ to $n$ are the values of the optimal LP solution $y^*_{i+1},\ldots,y^*_n$, and the $i$th entry is 1 (0, respectively)  Note that after we decide whether setting $x_i$ true or false, either $\LP_i = \LP_{i,t}$ (if we set $x_i$ true) or $\LP_i=\LP_{i,f}$ (if we set $x_i$ false).

The following lemma is the key to the algorithm and the analysis; we defer the proof.

\begin{lemma} \label{lem:lpkey} For each $i$, $i = 1,\ldots, n$, at least one of the following two inequalities is true:
$$\LP_{i-1} - \LP_{i,t} \leq t_i \mbox{  or  } \LP_{i-1} - \LP_{i,f} \leq f_i.$$
\end{lemma}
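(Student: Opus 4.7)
The plan is to exhibit a positive-coefficient combination of the two ``defects'' $(\LP_{i-1}-\LP_{i,t})-t_i$ and $(\LP_{i-1}-\LP_{i,f})-f_i$ that is always nonpositive, which will force the minimum of the two defects to be nonpositive as well. First I would decompose $\LP_{i-1}-\LP_{i,t}$, $\LP_{i-1}-\LP_{i,f}$, $t_i$, and $f_i$ as sums over clauses: only clauses $C_j$ that contain $x_i$ and are not yet satisfied by $S_{i-1}$ contribute anything, and for each such clause the contributions are explicit functions of $\alpha=y_i^*$, $w_j$, and $Q_j=\sum_{k\in P_j,\,k>i}y_k^*+\sum_{k\in N_j,\,k>i}(1-y_k^*)$, the LP support coming from later literals.

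Writing $A=(\LP_{i-1}-\LP_{i,t})-t_i$ and $B=(\LP_{i-1}-\LP_{i,f})-f_i$, the lemma is equivalent to $\min(A,B)\leq 0$. I would prove the stronger inequality
$$A\cdot\bigl(\alpha+\tfrac{1}{2}\bigr)+B\cdot\bigl(\tfrac{3}{2}-\alpha\bigr)\leq 0.$$
Since both $\alpha+\tfrac{1}{2}$ and $\tfrac{3}{2}-\alpha$ are strictly positive on $\alpha\in[0,1]$, this immediately gives $\min(A,B)\leq 0$: the minimum of two reals is bounded above by any positive-coefficient affine combination divided by the sum of the coefficients.

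The weighted inequality is verified clause by clause. Clauses with no variables of index greater than $i$ (so $Q_j=0$) contribute exactly $0$ to the left-hand side: the $(\alpha+\tfrac{1}{2})$-weighted contribution to $A$ cancels the $(\tfrac{3}{2}-\alpha)$-weighted contribution to $B$ by direct computation, regardless of the polarity of $x_i$ in the clause. For clauses that do contain later variables, define $\mu_j=1-\min(1,\alpha+Q_j)$ and $\rho_j=\min(1,\alpha+Q_j)-\min(1,Q_j)$ (with analogous $\nu_j,\sigma_j$ when $x_i\in N_j$), and note the identity $\mu_j+\rho_j=1-\min(1,Q_j)$ which collapses the per-clause contribution to a simple expression. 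A short case analysis on where $Q_j$ lies relative to the breakpoint $1-\alpha$ (or $\alpha$, in the negative case) shows that each such contribution is at most $-(\alpha-\tfrac{1}{2})^2\,w_j\leq 0$, with the maximum attained at $Q_j=1-\alpha$. Summing over clauses gives the desired global inequality. The only slightly delicate step is guessing the right weights $(\alpha+\tfrac{1}{2},\,\tfrac{3}{2}-\alpha)$; once those are in hand the rest is a routine calculation, and in particular no appeal to LP optimality of $y^*$ is required beyond feasibility.
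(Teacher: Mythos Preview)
Your approach is correct, but it differs substantially from the paper's. The paper first proves two ``cross'' bounds,
\[
\LP_{i-1}-\LP_{i,t}\le 2(1-y_i^*)\,f_i
\qquad\text{and}\qquad
\LP_{i-1}-\LP_{i,f}\le 2\,y_i^*\,t_i,
\]
by observing that changing the $i$th coordinate from $1$ (resp.\ $0$) to $y_i^*$ scales the per-clause effect by the factor $1-y_i^*$ (resp.\ $y_i^*$), and that the resulting change is bounded by the aggregate quantities $\SAT_{i,f}-\SAT_{i-1}$ and $\UNSAT_{i,f}-\UNSAT_{i-1}$ exactly as in the proof of Lemma~\ref{lem:opti}. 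It then argues by contradiction: if both defects $A=(\LP_{i-1}-\LP_{i,t})-t_i$ and $B=(\LP_{i-1}-\LP_{i,f})-f_i$ were positive, chaining the two cross bounds would give $t_i<4y_i^*(1-y_i^*)\,t_i\le t_i$, which is impossible.

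Your route is instead a direct averaging argument: you exhibit explicit positive weights $(\alpha+\tfrac12,\,\tfrac32-\alpha)$ (with $\alpha=y_i^*$) for which the weighted sum of $A$ and $B$ is $\le 0$, verified clause by clause; the breakpoint analysis on $Q_j$ versus $1-\alpha$ (or $\alpha$) indeed yields the per-clause bound $-(\alpha-\tfrac12)^2 w_j$, and the ``no later variables'' case cancels exactly as you claim. This is more computational than the paper's argument---the paper never descends to individual clauses or to the LP variable $Q_j$---but it has the virtue of producing an explicit convex-combination certificate rather than a contradiction chain, and it makes transparent that the same inequality $4\alpha(1-\alpha)\le 1$ is doing the work (your per-clause maximum $-(\alpha-\tfrac12)^2$ is just its rearrangement). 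Both proofs use only feasibility of $y^*$, not optimality.
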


The algorithm is then as following: when we consider variable $x_i$, we check whether $\LP_{i-1} - \LP_{i,t} \leq t_i$ and whether $\LP_{i-1} - \LP_{i,f} \leq f_i$.  If the former, we set $x_i$ true (and thus $\LP_i = \LP_{i,t}$); if the latter, we set $x_i$ false (and thus $\LP_i = \LP_{i,f}$).

The following lemma, which is analogous to Lemma \ref{lem:main}, now follows easily.
\begin{lemma} \label{lem:lpmain}
For $i=1,\ldots,n$,
$$\LP_{i-1} - \LP_i \leq B_i - B_{i-1}.$$
\end{lemma}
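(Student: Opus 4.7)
The plan is to observe that Lemma \ref{lem:lpmain} is a direct consequence of Lemma \ref{lem:lpkey} together with the rule by which the algorithm decides how to set $x_i$. So my first step is just to recall the algorithm's behavior: when considering $x_i$, it checks the two inequalities of Lemma \ref{lem:lpkey}, and Lemma \ref{lem:lpkey} guarantees that at least one of them holds, so the algorithm is always well-defined. Then the proof splits into two cases based on which branch is taken.

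In the first case, the algorithm verifies $\LP_{i-1} - \LP_{i,t} \leq t_i$ and sets $x_i$ true, so by definition $\LP_i = \LP_{i,t}$ and also $B_i - B_{i-1} = t_i$ (since $B_i = B_{i,t}$ when $x_i$ is set true, using the notation from the proof of Lemma \ref{lem:geq}). Thus $\LP_{i-1} - \LP_i = \LP_{i-1} - \LP_{i,t} \leq t_i = B_i - B_{i-1}$. In the second case, the algorithm verifies $\LP_{i-1} - \LP_{i,f} \leq f_i$ and sets $x_i$ false, so $\LP_i = \LP_{i,f}$ and $B_i - B_{i-1} = f_i$, yielding $\LP_{i-1} - \LP_i = \LP_{i-1} - \LP_{i,f} \leq f_i = B_i - B_{i-1}$. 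Either way the claimed inequality holds.

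The main obstacle is therefore not in Lemma \ref{lem:lpmain} itself, which is essentially bookkeeping, but rather in Lemma \ref{lem:lpkey}, whose proof is deferred. Once both are established, the deterministic $\tfrac{3}{4}$-approximation bound follows from Lemma \ref{lem:lpmain} by the same telescoping argument used in Theorem \ref{thm:34}: summing over $i$ gives $\OPT_{\LP} - w(S_n) = \LP_0 - \LP_n \leq B_n - B_0 = w(S_n) - \tfrac{1}{2} W$, and then using $\OPT \leq \OPT_{\LP}$ and $\OPT \leq W$ yields $w(S_n) \geq \tfrac{3}{4} \OPT$.
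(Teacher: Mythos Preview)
Your proof is correct and follows essentially the same approach as the paper: both argue by cases on which inequality of Lemma~\ref{lem:lpkey} the algorithm acts upon, noting that in each case $\LP_i$ and $B_i - B_{i-1}$ are exactly $\LP_{i,t}$ and $t_i$ (or $\LP_{i,f}$ and $f_i$), so the inequality transfers directly. Your additional remarks about Lemma~\ref{lem:lpkey} being the real content and about the telescoping consequence are accurate but extraneous to the proof of Lemma~\ref{lem:lpmain} itself.
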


\begin{proof}
If $\LP_{i-1} - \LP_{i,t} \leq t_i$, then we set $x_i$ true, so that $\LP_{i}  = \LP_{i,t}$ and $B_i - B_{i-1} = B_{i,t}-B_i = t_i$; thus the lemma statement holds.  The case in which $\LP_{i-1} - \LP_{i,f} \leq f_i$ is parallel.
\end{proof}

Given the lemma, we can prove the following.
\begin{theorem} For the assignment $S_n$ computed by the algorithm,
$$w(S_n) \geq \frac{3}{4} \OPT.$$
\end{theorem}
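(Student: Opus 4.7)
The plan is to mirror the proof of Theorem \ref{thm:34} almost verbatim, using Lemma \ref{lem:lpmain} in place of Lemma \ref{lem:main}. Since the algorithm is deterministic, no expectations are needed.

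First I would sum the inequality of Lemma \ref{lem:lpmain} over $i = 1,\ldots,n$ and telescope both sides to obtain
$$\LP_0 - \LP_n \leq B_n - B_0.$$
Next I would substitute the boundary values noted in Section \ref{sec:lp} and Section \ref{sec:notation}: by definition $\LP_0 = \OPT_{\LP}$ (the LP value for the vector $y^*$) and $\LP_n = w(S_n)$ (the LP vector is now the integral assignment $S_n$, and the best $z$ for an integral $y$ equals the weight of the satisfied clauses). Similarly $B_0 = \tfrac{1}{2}W$ and $B_n = w(S_n)$. Substituting yields
$$\OPT_{\LP} - w(S_n) \leq w(S_n) - \tfrac{1}{2}W,$$
or equivalently $\OPT_{\LP} + \tfrac{1}{2} W \leq 2\,w(S_n)$.

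Finally I would use the two standard bounds $\OPT \leq \OPT_{\LP}$ (LP relaxation) and $\OPT \leq W$ (the optimum cannot exceed the total clause weight) to deduce
$$2\,w(S_n) \;\geq\; \OPT_{\LP} + \tfrac{1}{2} W \;\geq\; \OPT + \tfrac{1}{2}\OPT \;=\; \tfrac{3}{2}\OPT,$$
which gives $w(S_n) \geq \tfrac{3}{4}\OPT$, as required.

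There is no real obstacle here once Lemma \ref{lem:lpmain} is in hand; the telescoping argument is identical to the one in Theorem \ref{thm:34}, only with $\OPT_{\LP}$ replacing $E[w(\OPT_0)]$ and with deterministic (rather than expected) values of $B_i$ and $\LP_i$. The genuinely nontrivial content has been pushed into Lemma \ref{lem:lpkey}, whose proof the authors defer; that is the step that would actually require work, but it is not needed for the theorem itself.
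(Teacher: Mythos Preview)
Your proposal is correct and matches the paper's own proof essentially step for step: sum Lemma~\ref{lem:lpmain}, telescope to $\LP_0-\LP_n\le B_n-B_0$, substitute $\LP_0=\OPT_{\LP}$, $\LP_n=B_n=w(S_n)$, $B_0=\tfrac12 W$, and finish with $\OPT\le\OPT_{\LP}$ and $\OPT\le W$. The only cosmetic difference is that the paper rearranges to $w(S_n)\ge\tfrac12\OPT_{\LP}+\tfrac14 W$ before applying the two bounds, whereas you keep it as $2w(S_n)\ge\OPT_{\LP}+\tfrac12 W$.
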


\begin{proof}
As in the proof of Theorem \ref{thm:34}, we sum together the inequalities given by Lemma \ref{lem:lpmain}, so that
$$\sum_{i=1}^n \left(\LP_{i-1} - \LP_i\right) \leq \sum_{i=1}^n \left(B_i - B_{i-1}\right).$$
Telescoping the sums, we get
$$\LP_0 - \LP_n \leq B_n - B_0,$$ or
$$\OPT_{\LP} - w(S_n) \leq w(S_n) - \frac{W}{2}.$$  Rearranging terms, we have
$$w(S_n) \geq \frac{1}{2} \OPT_{\LP} + \frac{W}{4} \geq \frac{3}{4} \OPT,$$ since both $\OPT_{\LP} \geq \OPT$ (since the LP is a relaxation) and $W \geq \OPT$.
\end{proof}

Now to prove Lemma \ref{lem:lpkey}.
\proofof{Lemma \ref{lem:lpkey}}
We claim that $\LP_{i-1} - \LP_{i,t} \leq 2(1 - y_i^*) f_i$ and $\LP_{i-1} - \LP_{i,f} \leq 2y^*_i t_i$.  The lemma statement follows from this claim; to see this, suppose otherwise, and neither statement holds.  Then by the claim and by hypothesis,
$$t_i  < \LP_{i-1} - \LP_{i,t} \leq 2(1 - y_i^*)f_i < 2(1 - y_i^*)(\LP_{i-1} - \LP_{i,f}) \leq 4(1-y^*_i)y^*_i t_i.$$
But this is a contradiction: since $0 \leq y^*_i \leq 1$, $4(1-y^*_i)y^*_i \leq 1$, and the inequality above implies $t_i < t_i$.  So at least one of the two inequalities must hold.

Now to prove the claim.   We observe that $\LP_{i-1} - \LP_{i,t}$ is equal to the outcome of changing an LP solution $y$ from $y_i=1$ to $y_i=y^*_i$; all other entries in the $y$ vector remain the same.  Both $\LP_{i-1}$ and $\LP_{i,t}$ satisfy at least $\SAT_{i-1}$ weight of clauses; any increase in weight of satisfied clauses due to reducing $y_i$ from 1 to $y^*_i$ must be due to clauses in which $x_i$ occurs negatively; if we reduce $y_i$ from 1 to 0, we would get a increase of the objective function of at most $\SAT_{i,f} - \SAT_{i-1}$, but since we reduce $y_i$ from 1 to $y^*_i$, we get $(1 - y^*_i)(\SAT_{i,f} - \SAT_{i-1})$.  Additionally both $\LP_{i-1}$ and $\LP_{i,t}$ have at least $\UNSAT_{i-1}$ total weight of unsatisfied clauses; any increase in the weight of unsatisfied clauses due to reducing $y_i$ from 1 to $y^*_i$ must be due to clauses in which $x_i$ occurs positively; if we reduce $y_i$ from 1 to 0, we would get a decrease in the objective function of at least $\UNSAT_{i,f} - \UNSAT_{i-1}$, but since we reduce $y_i$ from 1 to $y^*_i$ we get $(1 - y^*_i)(\UNSAT_{i,f} - \UNSAT_{i-1})$.  Then we have
$$\LP_{i-1} - \LP_{i,t} \leq (1 - y^*_i)(\SAT_{i,f} - \SAT_{i-1} - (\UNSAT_{i,f} - \UNSAT_{i-1})) \leq 2(1-y^*_i)f_i,$$ where the last inequality follows as in the proof of Lemma \ref{lem:opti}.  The proof that $\LP_{i-1} - \LP_{i,f} \leq 2y^*_i t_i$ is analogous.
\qed

\section{The Van Zuylen Algorithm}
\label{sec:anke}

In this section, we show that Van Zuylen's algorithm is equivalent to the algorithm of Section \ref{sec:alg}.  Van Zuylen's algorithm \cite{VanZuylen11} uses the following quantities to decide how to set each variable $x_i$.  Let $W_i$ be the weight of the clauses that become satisfied by setting $x_i$ true and unsatisfied by setting $x_i$ false, and let $\bar W_i$ be the weight of the clauses satisfied by setting $x_i$ false and unsatisfied by setting $x_i$ true.   Let $F_i$ be the weight of the clauses that are satisfied by setting $x_i$ true, and are neither satisfied nor unsatisfied by setting $x_i$ false; similarly, $\bar F_i$ is the weight of the clauses that are satisfied by setting $x_i$ false, and are neither satisfied nor unsatisfied by setting $x_i$ true.  Then Van Zuylen calculates a quantity $\alpha$ as follows:
$$\alpha = \frac{W_i + F_i - \bar W_i}{F_i + \bar F_i}.$$
Van Zuylen's algorithm sets $x_i$ false if $\alpha \leq 0$, true if $\alpha \geq 1$, and sets $x_i$ true with probability $\alpha$ if $0 < \alpha < 1$.

We observe that in terms of our prior quantities, $W_i + F_i = \SAT_{i,t} - \SAT_{i-1}$, while $\bar W_i = \UNSAT_{i,f} - \UNSAT_{i-1}$.  Then
\begin{align*}
W_i + F_i - \bar W_i & = \SAT_{i,t} - \SAT_{i-1} + (W - \UNSAT_{i,t}) - (W - \UNSAT_{i-1}) \\
& = 2(B_{i,t} - B_{i-1}) = 2t_i.
\end{align*}
Similarly, $\bar W_i + \bar F_i - W_i = 2f_i$.
Furthermore,
\begin{align*}
F_i + \bar F_i & = (F_i + W_i) + (\bar F_i + \bar W_i) - W_i - \bar W_i \\
& = (\SAT_{i,t} - \SAT_{i-1}) + (\SAT_{i,f} - \SAT_{i-1}) \\
& \qquad - (\UNSAT_{i,t} - \UNSAT_{i-1}) - (\UNSAT_{i,f} - \UNSAT_{i-1})\\
& = [(\SAT_{i,t} + (W - \UNSAT_{i,t}) - (\SAT_{i-1} + (W - \UNSAT_{i-1})] \\
& \qquad + [(\SAT_{i,f} + (W - \UNSAT_{i,f}) - (\SAT_{i-1} + (W - \UNSAT_{i-1})] \\
& = 2(B_{i,t} - B_{i-1}) + 2(B_{i,f} - B_{i-1})\\
& = 2(t_i + f_i).
\end{align*}
Thus $\alpha \leq 0$ if and only if $t_i \leq 0$, and in this case $x_i$ is set false.  Also, $\alpha \geq 1$ if and only if
$$W_i + F_i - \bar W_i \geq F_i + \bar F_i$$ or
$$\bar W_i + \bar F_i - W_i \leq 0,$$ which is equivalent to $f_i \leq 0,$ and in this case $x_i$ is set true.  Finally, $0 < \alpha < 1$ if and only if $t_i > 0$ and $f_i > 0$ and in this case, $x_i$ is set true with probability $\alpha = \frac{2t_i}{2(t_i + f_i)} = \frac{t_i}{t_i + f_i}.$  Thus Van Zuylen's algorithm and the algorithm of Section \ref{sec:alg} are equivalent.

\section{Conclusions}
\label{sec:conc}

A natural question is whether there exists a simple deterministic $\frac{3}{4}$-approximation algorithm for MAX SAT that does not require the use of linear programming.  The paper of Poloczek \cite{Poloczek11} rules out certain types of algorithms, but other types might still be possible.  Another question is whether randomization is inherently necessary for the $\frac{1}{2}$-approximation algorithm for (nonmonotone) submodular function maximization of Buchbinder  et al. \cite{BuchbinderFNS12}; that is, can one achieve a deterministic $\frac{1}{2}$-approximation algorithm?  It might be possible to show that given a fixed order of items and a restriction on the algorithm that it must make an irrevocable decision on whether to include an item in the solution set or not, a deterministic $\frac{1}{2}$-approximation algorithm is not possible.  However, it seems that there must be some  reasonable restriction on the number of queries made to the submodular function oracle.

\bibliographystyle{abbrv}
\bibliography{satbib}

\end{document}